\theoremstyle:=definition,remark,plain\do{%
        \expandafter\g@addto@macro\csname th@\theoremstyle\endcsname{%
            \addtolength\thm@preskip\parskip
            }%
        }
\crefname{lemma}{lemma}{lemmas}
\Crefname{lemma}{Lemma}{Lemmas}
\crefname{thm}{theorem}{theorems}
\Crefname{thm}{Theorem}{Theorems}
\crefname{prop}{proposition}{propositions}
\Crefname{prop}{Proposition}{Propositions}
\crefname{assumption}{assumption}{assumptions}
\crefname{assumption}{Assumption}{Assumptions}
\def\adl@drawiv#1#2#3{%
        \hskip.5\tabcolsep
        \xleaders#3{#2.5\@tempdimb #1{1}#2.5\@tempdimb}%
                #2\z@ plus1fil minus1fil\relax
        \hskip.5\tabcolsep}
\newcommand{\cdashlinelr}[1]{%
  \noalign{\vskip\aboverulesep
           \global\let\@dashdrawstore\adl@draw
           \global\let\adl@draw\adl@drawiv}
  \cdashline{#1}
  \noalign{\global\let\adl@draw\@dashdrawstore
           \vskip\belowrulesep}}
\renewcommand{\epsilon}{\varepsilon}
\declaretheorem[style=plain,name=Theorem]{theorem}
\declaretheorem[style=definition,sibling=theorem,name=Definition]{defn}
\newenvironment{example*}
 {\pushQED{\qed}\example}
 {\popQED\endexample}
\numberwithin{equation}{section}
\newcommand{\defnphrase}[1]{\emph{#1}}
\newcommand{\defeq}{\coloneqq}
\newcommand{\exclude}{\backslash}
\DeclareMathOperator*{\logit}{logit}
\newcommand{\EE}{\mathbb{E}}
\newcommand{\var}{\mathrm{var}}
\renewcommand{\Pr}{\mathrm{P}}
\newcommand{\given}{\mid}
\newcommand{\abs}[1]{\left\lvert#1 \right\rvert}
\newcommand{\dist}{\ \sim\ }
\newcommand{\distiid}{\overset{\mathrm{iid}}{\dist}}
\newcommand{\distind}{\overset{ind}{\dist}}
\newcommand{\cdo}{\mathrm{do}} 
\newcommand{\betaDist}{\mathrm{Beta}}
\newcommand{\bernDist}{\mathrm{Bern}}
\providecommand\given{} 
\newcommand\SetSymbol[1][]{
  \nonscript\,#1:\nonscript\,\mathopen{}\allowbreak}
\DeclarePairedDelimiterX\Set[1]{\lbrace}{\rbrace}%
{ \renewcommand\given{\SetSymbol[]} #1 }
      \OR\ifentrytype{incollection}\OR\ifentrytype{inproceedings}%
      \OR\ifentrytype{inreference}}
\crefname{example}{Example}{Examples}
\crefname{lemma}{Lemma}{Lemmas}
\crefname{cor}{Corollary}{Corollaries}
\crefname{theorem}{Theorem}{Theorems}
\crefname{assumption}{Assumption}{Assumptions}
\newcommand{\digammafn}{\uppsi}
\declaretheoremstyle[
spacebelow=\parsep,
    spaceabove=\parsep,
  mdframed={
    backgroundcolor=gray!10!white,     %
    hidealllines=true, 
    innertopmargin=8pt, 
    innerbottommargin=4pt, 
    skipabove=8pt,
    skipbelow=10pt,
    nobreak=true
}
]{grayboxed}
\declaretheorem[style=grayboxed,name=Assumption]{gassumption}
\crefname{gassumption}{Assumption}{Assumptions}
\definecolor{WowColor}{rgb}{.75,0,.75}
\definecolor{SubtleColor}{rgb}{0,0,.50}
\newcounter{margincounter}
\title{Sense and Sensitivity Analysis: Simple Post-Hoc Analysis of Bias Due to Unobserved Confounding}
\date{}
\author[1]{Victor Veitch}
\author[2]{Anisha Zaveri}
\affil[1]{Google Research}
\affil[2]{Weill Cornell Medicine}
\begin{document}
\maketitle

\begin{abstract}
  It is a truth universally acknowledged that an observed association without known mechanism must be in want of a causal estimate.
  However, Causal estimates from observational data will be biased in the presence of `unobserved confounding'.
   Nevertheless, we might hope that the influence of unobserved confounders is weak
   relative to a `large' estimated effect.
  The purpose of this paper is to develop \emph{Austen plots}, a sensitivity analysis tool to aid such judgments
  by making it easier to reason about potential bias induced by unobserved confounding.
  We formalize confounding strength in terms of how strongly the unobserved confounding influences treatment assignment and outcome.
  For a target level of bias, an Austen plot shows the minimum values of treatment and outcome influence
  required to induce that level of bias.
  Austen plots generalize the classic sensitivity analysis approach of \citet{Imbens:2003}.
  Critically, Austen plots allow \emph{any} approach for modeling the observed data. %
  We illustrate the tool by assessing biases for several real causal inference problems, using a variety of
  machine learning approaches for the initial data analysis. Code, demo data, and a tutorial are available at %
   at \href{https://github.com/anishazaveri/austen_plots}{github.com/anishazaveri/austen\_plots}.
\end{abstract}

\section{Introduction}

\linepenalty=1000

The high costs of randomized controlled trials coupled with the relative availability of
(large scale) observational data motivate attempts to infer causal relationships from observational data.
For example, we may wish to use a database of electronic health records to estimate the effect of a treatment.
Causal inference from observational data must account for possible \defnphrase{confounders} that influence
both treatment assignment and the outcome; e.g., wealth may be a common cause influencing whether a patient
takes an expensive drug and whether they recover.
Often, causal inference is based on the assumption of `no unobserved confounding'; i.e.,
the assumption that the observed covariates include all common causes of the treatment assignment and outcome.
This assumption is fundamentally untestable from observed data, but its violation can induce bias in the estimation
of the treatment effect---the unobserved confounding may completely or in part explain the observed association.
Our aim in this paper is to develop a sensitivity analysis tool to aid in reasoning about potential bias induced by unobserved confounding.

Intuitively, if we estimate a large positive effect then we might expect the real effect is also positive,
even in the presence of mild unobserved confounding.
For example, consider the association between smoking and lung cancer.
One could argue that this association arises from a genetic mutation that predisposes
carriers to both an increased desire to smoke and to a greater risk of lung cancer.
However, the association between smoking and lung cancer is large---is it plausible that
some unknown genetic association could have a strong enough influence to explain the association?
Answering such questions requires a domain expert to make a judgment about whether plausible
confounding is ``mild'' relative to the ``large'' effect.
In particular, the domain expert must translate judgments about the strength
of the unobserved confounding into judgments about the bias induced in the estimate of the effect.
Accordingly,
we must formalize what is meant by strength of unobserved confounding,
and to show how to translate judgments about confounding strength into judgments about bias.\looseness=-1

\begin{figure}
  \begin{center}
    \includegraphics[height=0.6\textwidth]{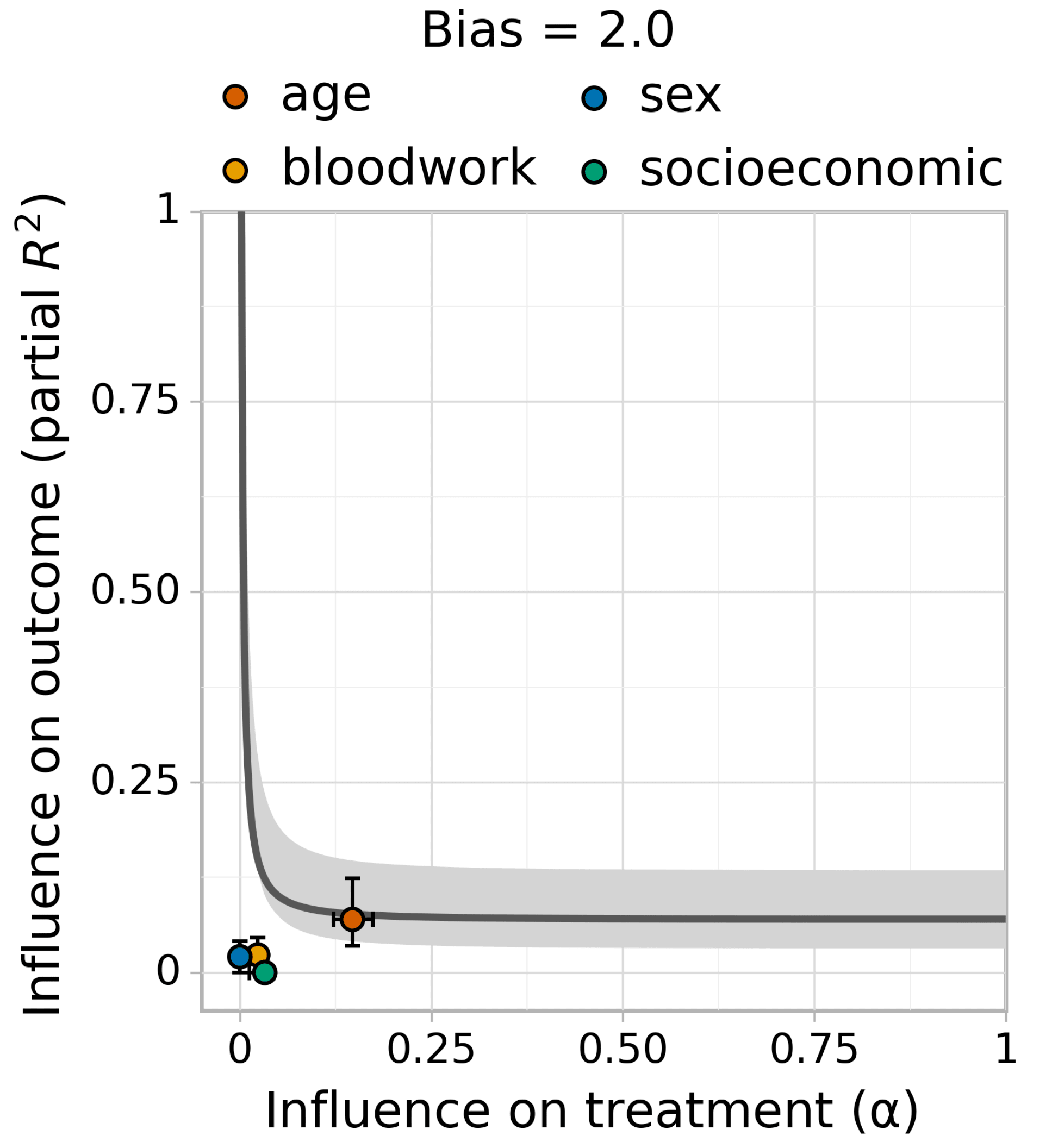}  
  \end{center}
  \vspace{-5pt}
  \caption{Austen plot showing how strong an unobserved confounder would need to be to induce
    a bias of 2 in an observational study of the effect of combination blood pressure medications on diastolic blood pressure \cite{Dorie:Harada:Carnegie:Hill:2016}. 
    We chose this bias to equal the nominal average treatment effect estimated from the data.
    We model the outcome with Bayesian Additive Regression Trees and the treatment assignment
    with logistic regression---Austen plots accommodate any choice of models.
    The curve shows all values treatment and outcome influence that would induce a bias of 2.
    The colored dots show the influence strength of (groups of) observed covariates,
    given all other covariates.
    For example, an unobserved confounder with as much influence as the patient's age
    might induce a bias of about 2.
    \label{fig:dbp-intro}}
\end{figure}
A prototypical example, due to \citet{Imbens:2003} (building on \cite{Rosenbaum:Rubin:1983}), illustrates
the broad approach.
The observed data consists of a treatment $T$,
an outcome $Y$,
and covariates $X$ that may causally affect the treatment and outcome.
\citet{Imbens:2003} then posits an additional unobserved binary confounder
$U$ for each patient, and
supposes that the observed data and unobserved confounder
were generated according to:
\begin{align*}\label{eq:rosenbaum-rubin}
U_{i} & \distiid\bernDist(\nicefrac{1}{2})\\
T_{i}\given X_{i},U_{i} & \distind\bernDist(\mathrm{sig}(\gamma X_{i} \!+\! \alpha U_{i})) \\
Y_{i}\given X_{i},T_{i},U_{i} & \distind \mathrm{Norm}(\tau T_{i} \!+\! \beta X_{i} \!+\! \delta U_{i},\sigma^{2}). 
\end{align*}
where $\mathrm{sig}$ is the sigmoid function.
If we had observed $U_{i}$, we could estimate $(\hat{\tau},\hat{\gamma},\hat{\beta},\hat{\alpha},\hat{\delta},\hat{\sigma}^{2})$
from the data
and report $\hat{\tau}$
as the estimate of the average treatment effect. 
Since $U_{i}$ is not observed, it is not possible to identify the
parameters from the data. Instead, we make (subjective) judgments
about plausible values of $\alpha$---how
strongly $U_{i}$ affects the treatment assignment---and
$\delta$---how strongly $U_{i}$ affects the outcome.
Contingent on plausible $\alpha=\alpha^{*}$ and $\delta=\delta^{*}$,
the other parameters can be estimated.
This yields an estimate of the treatment effect $\hat{\tau}(\alpha^{*},\delta^{*})$
under the presumed values of the sensitivity parameters.

The approach just outlined has a major drawback:
it relies on a parametric model for the full data generating process. 
The assumed model is equivalent to assuming that, had $U$ been observed,
it would have been appropriate to use logistic regression to model
treatment assignment, and linear regression to model the outcome.
This assumption also implies a simple, parametric model for the relationships
governing the observed data.
This restriction is out of step with modern practice, where we use flexible machine-learning
methods to model these relationships. For example, the assumption
forbids the use of neural networks or random forests, though such
methods are often state-of-the-art for causal effect estimation.

\parhead{Austen plots} The purpose of this paper is to introduce \emph{Austen plots},
an adaptation of Imbens' approach that fully decouples
sensitivity analysis and modeling of the observed data. 
An example Austen plot is shown in \cref{fig:dbp-intro}.
The high-level idea is to posit a generative model
that uses a simple, interpretable parametric form for the
influence of the unobserved confounder, but
that \emph{puts no constraints on the model for the observed data}.
We then use the parametric part of the model to
formalize ``confounding strength'' and to compute the induced bias
as a function of the confounding.\looseness=-1

We further adapt two innovations pioneered by \citet{Imbens:2003}.
First, we find a parameterization of the model so that the
sensitivity parameters, measuring strength of confounding,
are on a standardized, unitless scale.
This allows us to compare the strength of hypothetical unobserved confounding
to the strength of observed covariates, measured from data.
Second, we plot the curve
of all values of the sensitivity parameter that would yield given level of bias.
This moves the analyst judgment from ``what are
plausible values of the sensitivity parameters?'' to ``are sensitivity
parameters this extreme plausible?''

\cref{fig:dbp-intro}, an Austen plot for an observational study of the effect of
combination medications on diastolic blood pressure, illustrates the idea. A bias of 2 would suffice to undermine the
qualitative conclusion that the blood-pressure treatment is effective. Examining the plot,
an unobserved confounder as strong as age could induce this amount of confounding,
but no other (group of) observed confounders has so much influence.
Accordingly, if a domain expert thinks an unobserved confounder as strong as age is unlikely
then they may conclude that the treatment is likely effective. Or, if such a confounder is plausible,
they may conclude that the study fails to establish efficacy.

The purpose of this paper is adapting Imbens' sensitivity analysis
approach to allow for arbitrary models for observed data.
The contributions are: 1. Positing a generative model that is
both easily interpretable and where the required bias calculations are tractable.
2. Deriving a reparameterization that standardizes the scale of influence strength,
and showing how to estimate the influence strength of observed covariates for reference.
And, 3. illustrative examples showing that Austen plots preserve the key elements of Imbens'
approach and are informative about sensitivity to unobserved confounding in real-world data.

The key advantages of Austen plots as a sensitivity analysis method are\footnote{See \cref{sec:related-work} for a more detailed comparison with related work.}
1. Plausibility judgments are made on directly
interpretable quantities, the total confounding influence on $Y$ and $T$.
Additionally, the Austen plot model does not rely on the detailed nature of the
unobserved confounding---there may be one or many unobserved confounders, with
any sort of distribution---all that matters is the total confounding influence.
2. The unobserved strength of confounding can be directly compared to the strength of observed covariates.
3. The method is entirely post-hoc.
That is, the analyst does not need to consider any aspect of the sensitivity
analysis when modeling the observed data.
In particular, producing Austen plots requires \emph{only predictions} from the data models.
We provide software and a tutorial for producing the plots.\footnote{\href{https://github.com/anishazaveri/austen_plots}{github.com/anishazaveri/austen\_plots}}

\parhead{Notation}
For concreteness, we focus on the estimation of the average effect of a binary treatment.
The data are generated independently and identically $(Y_{i},T_{i},X_{i},U_{i}) \distiid P$,
where $U_{i}$ is not observed and $P$ is some unknown probability distribution.
The average treatment affect (ATE) is 
\[
\mathtt{ATE}=\EE[Y\given\cdo(T=1)]-\EE[Y\given\cdo(T=0)].
\]
The use of Pearl's $\cdo$ notation indicates that the effect of interest
is causal.
The results that follow can also be simply adapted to the average treatment effect on the treated, see \cref{sec:att}.

The traditional approach to causal estimation assumes that the observed
covariates $X$ contain all common causes of $Y$ and $T$.
If this `no unobserved confounding' assumption holds, then the ATE is equal
to parameter, $\tau$, of the observed data distribution, where
\begin{equation}
\tau=\EE[\EE[Y\given X,T=1]-\EE[Y\given X,T=0]].\label{eq:psi-def}
\end{equation}
The parameter $\tau$ can be estimated from a finite data sample.
The general approach proceeds in two steps. First, we produce estimates
$\hat{g}$ and $\hat{Q}$ for the propensity score $g$ and the conditional
expected outcome $Q$, where
\begin{defn}
The \defnphrase{propensity score} $g$ is $g(x)=\Pr(T=1\given X=x)$
and the \defnphrase{conditional expected outcome} $Q$ is $Q(t,x)=\EE[Y\given T=t,X=x]$.
\end{defn}
In modern practice, $Q$ and $g$ are often estimated by fitting flexible
machine learning models. The second step is to plug the estimated
$\hat{Q}$ and $\hat{g}$ in to some downstream estimator $\hat{\tau}$.
For example,
following \ref{eq:psi-def}, the estimator
\begin{align*}
\hat{\tau}^{Q} & =\frac{1}{n}\sum_{i}\hat{Q}(1,x_{i})-\hat{Q}(0,x_{i}),
\end{align*}
is a natural choice. Other estimators incorporate $\hat{g}$.

We are interested in the case of possible unobserved confounding.
That is, where $U$ causally affects $Y$ and $T$. If there is unobserved
confounding then the parameter $\tau$ is not equal to the ATE, so
$\hat{\tau}$ is a biased estimate. Inference about the ATE then divides
into two tasks. First, the statistical task: estimating $\tau$ as
accurately as possible from the observed data. And, second, the causal
(domain-specific) problem of assessing $\mathtt{bias}=\mathtt{ATE}-\tau$.
We emphasize that our focus here is bias due to causal misidentification,
not the statistical bias of the estimator.
Our aim is to reason about the bias induced by unobserved confounding\textemdash the
second task\textemdash in a way that imposes no constraints on
the modeling choices for $\hat{Q}$, $\hat{g}$ and $\hat{\tau}$
used in the initial analysis.

\section{Sensitivity Model}

Our sensitivity analysis should impose no constraints on how the \emph{observed} data is modeled.
However, sensitivity analysis demands some assumption on the relationship between the observed data
and the \emph{unobserved} confounder.
It is convenient to formalize such
assumptions by specifying a probabilistic model for how the data is generated.
The strength of confounding is then formalized in terms of the parameters
of the model (the sensitivity parameters).
Then, the bias induced by the confounding can be derived from the assumed model.
Our task is to posit a generative model that both yields a useful and easily interpretable sensitivity analysis,
and that avoids imposing any assumptions about the observed data.

To begin, consider the functional form of the
sensitivity model used by \citet{Imbens:2003}.
\begin{align}
\logit\Pr(T=1\given x,u) & =h(x)+\alpha u\label{eq:functional-form-g} \\
\EE[Y\given t,x,u] & =l(t,x)+\delta u,\label{eq:functional-form-q}
\end{align}
for some functions $h$ and $l$. That is, the propensity score is
logit-linear in the unobserved confounder, and the conditional expected
outcome is linear.

By rearranging \cref{eq:functional-form-g} to
solve for $u$ and plugging in to \cref{eq:functional-form-q}, we see that
it's equivalent to assume $\EE[Y\given t,x,u] =\tilde{l}(t,x)+\tilde{\delta} \logit\Pr(T=1\given x,u)$.
That is, the unobserved confounder $u$ only influences the outcome through
the propensity score.
Accordingly, by positing a distribution on $\Pr(T=1\given x,u)$ directly,
we can circumvent the need to explicitly articulate $U$ (and $h$).

\begin{defn}
Let $\tilde{g}(x,u)=\Pr(T=1\given x,u)$ denote the propensity score
given observed covariates $x$ and the unobserved confounder $u$.
\end{defn}
The insight is that we can posit a sensitivity model by defining a distribution
on $\tilde{g}$ directly.
The logit-linear model does not directly lead to a tractable sensitivity analysis. 
Instead, we choose:%
\[
\tilde{g}(X,U)\given X \!\dist\! \betaDist(g(X)(\nicefrac{1}{\alpha}\!-\!1), (1\!-\!g(X))(\nicefrac{1}{\alpha}\!-\!1)).
\]
The sensitivity parameter $\alpha$ plays the same role as in Imbens'
model: it controls the influence of the unobserved confounder $U$ on treatment assignment.
When $\alpha$ is close to $0$ then $\tilde{g}(X,U)\given X$ is tightly concentrated around $g(X)$, and
the unobserved confounder has little influence. That is, $U$ minimally affects our belief about
who is likely to receive treatment.
Conversely, when $\alpha$ is close to $1$ then $\tilde{g}$ concentrates near $0$ and $1$;
i.e., knowing $U$ would let us accurately predict treatment assignment.
Indeed, it can be shown that $\alpha$ is the change in our belief about how likely
a unit was to have gotten the treatment, given that they were actually
observed to be treated (or not):
\begin{equation}
 \label{eq:alpha-interp}
 \alpha=\EE[\tilde{g}(X,U) \given T=1]-\EE[\tilde{g}(X,U) \given T=0].
 \end{equation}

With the $\tilde{g}$ model in hand, we define our sensitivity model:
\begin{gassumption}[Sensitivity Model]\label{ass:main-sensitivity-model}
\begin{align*}
\tilde{g}(X,U)\given X & \!\dist\! \betaDist(g(X)(\nicefrac{1}{\alpha}\!-\!1), (1\!-\!g(X))(\nicefrac{1}{\alpha}\!-\!1)) \\
T\given X,U & \!\dist\! \bernDist(\tilde{g}(X,U)) \\
  \EE[Y\given T,X,U] & = Q(T,X) + \delta \big( \logit\tilde{g}(X,U) - \EE[\logit\tilde{g}(X,U)\given X,T] \big).
\end{align*}
\end{gassumption}
This model has been constructed to satisfy the requirement
that the propensity score and conditional expected outcome
are the $g$ and $Q$ actually present in the observed data:
\begin{align*}
\Pr(T=1\given X) & = \EE[\EE[T \given X,U] \given X] = \EE[\tilde{g}(X,U)\given X]=g(X)\\
\EE[Y\given T,X] & = \EE[\EE[Y \given T,X,U] \given T,X] = Q(T,X).
\end{align*}
The sensitivity parameters are $\alpha$, controlling
the dependence between the unobserved confounder the treatment assignment,
and $\delta$, controlling the relationship with the outcome.
In effect, by making an assumption about the propensity score directly,
we have sidestepped the need to explicitly articulate the parts of
the observed/unobserved relationship that are not actually relevant
for the treatment effect estimation.

\parhead{Bias}
We now turn to calculating the bias induced by unobserved confounding.
By assumption, $X$ and $U$ together
suffice to render the average treatment effect identifiable as:
\[
\mathtt{ATE}=\EE[\EE[Y\given T=1,X,U]-\EE[Y\given T=0,X,U]].
\] 
Plugging in our sensitivity model yields,
\[
  \mathtt{ATE} =\EE[Q(1,X)-Q(0,X)] +\delta(\EE[\logit\tilde{g}(X,U)\given X,T=1]-\EE[\logit\tilde{g}(X,U)\given X,T=0]).
\]
The first term is the observed-data estimate $\tau$, so
\[
  \mathtt{bias} = \delta(\EE[\logit\tilde{g}(X,U)\given X,T=1]-\EE[\logit\tilde{g}(X,U)\given X,T=0]).
\]
Then, by invoking Beta-Bernoulli conjugacy and standard Beta identities,
we arrive at,
\begin{theorem}\label{thm:bias-expression}
  Under our sensitivity model, \cref{ass:main-sensitivity-model}, an unobserved confounder with influence $\alpha$ and $\delta$ induces bias in
  the estimated treatment effect equal to
\begin{align*}
\mathtt{bias} & =\delta\EE\big[\digammafn \big(g(X)(\nicefrac{1}{\alpha}-1)+1 \big)-\digammafn \big((1-g(X))(\nicefrac{1}{\alpha}-1) \big)\\
 & \quad-\digammafn \big(g(X)(\nicefrac{1}{\alpha}-1)\big)+\digammafn \big((1-g(X))(\nicefrac{1}{\alpha}-1)+1 \big)\big],
\end{align*}
where $\digammafn$ is the digamma function
\end{theorem}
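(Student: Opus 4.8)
The plan is to take the bias expression established just before the theorem, which (making the outer expectation over $X$ explicit) reads
\[
\mathtt{bias} = \delta\,\EE\big[\EE[\logit\tilde{g}(X,U)\given X,T=1]-\EE[\logit\tilde{g}(X,U)\given X,T=0]\big],
\]
and evaluate the two inner conditional expectations in closed form. Once that is done, substituting the Beta parameters, collapsing the outer expectation over $X$, and multiplying through by $\delta$ gives the stated formula. So the entire content is a one-variable computation: identify the conditional law of the scalar $\tilde{g}(X,U)$ given $(X,T)$, then read off $\EE[\logit(\cdot)]$ for a Beta variate.

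First I would condition on $X=x$ in the support of $X$ and write $a(x)=g(x)(\nicefrac{1}{\alpha}-1)$, $b(x)=(1-g(x))(\nicefrac{1}{\alpha}-1)$, so that \cref{ass:main-sensitivity-model} says $\tilde{g}(X,U)\given X=x \dist \betaDist(a(x),b(x))$ and $T\given\tilde{g}(X,U),X=x \dist \bernDist(\tilde{g}(X,U))$. Under overlap ($g(x)\in(0,1)$) and $\alpha\in(0,1)$ we have $a(x),b(x)>0$, so this is a genuine conjugate pair, and Beta--Bernoulli conjugacy gives the conditional law of $\tilde{g}(X,U)$ given $X=x,T=1$ as $\betaDist(a(x)+1,b(x))$ and given $X=x,T=0$ as $\betaDist(a(x),b(x)+1)$. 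I would stress here that this is the \emph{posterior} of the random variable $\tilde{g}(X,U)$ given $(X,T)$, even though $\tilde{g}(X,U)$ is a deterministic function of $(X,U)$.

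The only analytic ingredient is the log-moment identity for the Beta law: for $V\dist\betaDist(p,q)$ with $p,q>0$, differentiating $\log\Gamma(p)+\log\Gamma(q)-\log\Gamma(p+q)$ under the integral sign yields $\EE[\log V]=\digammafn(p)-\digammafn(p+q)$ and $\EE[\log(1-V)]=\digammafn(q)-\digammafn(p+q)$, hence $\EE[\logit V]=\EE[\log V]-\EE[\log(1-V)]=\digammafn(p)-\digammafn(q)$. Interchanging differentiation and integration is routine (the integrand and its $p,q$ partials are dominated uniformly on compacts in $\{p,q>0\}$), and finiteness of the log-moments is exactly where positivity of the Beta parameters, i.e. overlap and $\alpha<1$, is used. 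Applied to the two posteriors this gives $\EE[\logit\tilde{g}(X,U)\given X=x,T=1]=\digammafn(a(x)+1)-\digammafn(b(x))$ and $\EE[\logit\tilde{g}(X,U)\given X=x,T=0]=\digammafn(a(x))-\digammafn(b(x)+1)$.

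Subtracting these, substituting back $a(X)=g(X)(\nicefrac{1}{\alpha}-1)$ and $b(X)=(1-g(X))(\nicefrac{1}{\alpha}-1)$, taking $\EE$ over $X$, and multiplying by $\delta$ reproduces the displayed bias. I do not anticipate a substantive obstacle --- everything after the bias expression is bookkeeping --- so the plan reduces to being careful about two points: (i) phrasing the conjugacy step as a statement about the conditional distribution of $\tilde{g}(X,U)$ given $(X,T)$, and (ii) the mild regularity (strictly positive Beta parameters, equivalently overlap and $\alpha<1$) that makes the digamma terms finite.
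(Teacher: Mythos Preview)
Your proposal is correct and follows exactly the route the paper sketches: it reduces to Beta--Bernoulli conjugacy for the posterior of $\tilde{g}(X,U)$ given $(X,T)$ together with the standard identity $\EE[\logit V]=\digammafn(p)-\digammafn(q)$ for $V\dist\betaDist(p,q)$. The paper only states ``by invoking Beta-Bernoulli conjugacy and standard Beta identities,'' so your write-up simply supplies the details the paper omits.
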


\parhead{Reparameterization}
The model in the previous section provides a formalization of confounding
strength and tells us how much bias is induced by a given strength
of confounding. This lets us translate judgments about confounding
strength to judgments about bias.
However, $\delta$ may be difficult to interpret.
Following \citet{Imbens:2003}, we will reexpress
the outcome-confounder strength in terms of the partial coefficient
of determination: 
\begin{equation*}
  R_{Y,\mathrm{par}}^{2}(\alpha,\delta)=\frac{\EE(Y-Q(T,X))^{2}-\EE(Y-\EE[Y\given T,X,U])^{2}}{\EE(Y-Q(T,X))^{2}}.
\end{equation*}

This parameterization has two advantages over $\delta$.
First, $R_{Y,\mathrm{par}}^{2}$ has a familiar interpretation---the proportion of previously
unexplained variation in $Y$ that is explained by the unobserved
covariate $U$.
Second,  $R_{Y,\mathrm{par}}^{2}$ has a fixed, unitless scale---enabling 
easy comparisons with reference values.

The key to computing the reparameterization is the following result
(proof in appendix):
\begin{restatable}{theorem}{tpartrsq}\label{thm:part-rsq}
  Under our sensitivity model, \cref{ass:main-sensitivity-model},
  the outcome influence is
  \begin{align*}
    R_{Y,\mathrm{par}}^{2}(\alpha,\delta) =
    \delta^{2}\sum_{t=0}^1\frac{ \EE\big[\digammafn_1 \big(g(X)^t(1-g(X))^{1-t}(\nicefrac{1}{\alpha}-1) + 1[T=t]\big)\big]
    }{\EE[(Y-Q(T,X))^{2}]},
    \end{align*}
where $\digammafn_1$ is the trigamma function.
\end{restatable}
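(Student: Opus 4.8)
The plan is to reduce the statement to a single variance computation for the logit of a Beta random variable. First I would use \cref{ass:main-sensitivity-model} to simplify the numerator of $R_{Y,\mathrm{par}}^2$. Write $\varepsilon \defeq Y - \EE[Y\given T,X,U]$ for the outcome residual and $W \defeq \logit\tilde{g}(X,U) - \EE[\logit\tilde{g}(X,U)\given X,T]$, so that the sensitivity model gives the decomposition $Y - Q(T,X) = \varepsilon + \delta W$. Because $\EE[\varepsilon\given T,X,U] = 0$ while $W$ is a deterministic function of $(X,T,U)$, the tower rule kills the cross term, $\EE[\varepsilon W] = \EE\big[W\,\EE[\varepsilon\given T,X,U]\big] = 0$, so $\EE[(Y - Q(T,X))^2] = \EE[\varepsilon^2] + \delta^2\EE[W^2]$ and $\EE[(Y - \EE[Y\given T,X,U])^2] = \EE[\varepsilon^2]$. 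Hence the numerator of $R_{Y,\mathrm{par}}^2$ equals $\delta^2\EE[W^2] = \delta^2\,\EE\big[\var(\logit\tilde{g}(X,U)\given X,T)\big]$, and it only remains to evaluate this conditional variance.

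Next I would identify the conditional law of $\tilde{g}(X,U)$ and invoke a Beta-distribution identity. By Beta--Bernoulli conjugacy --- exactly the step used to prove \cref{thm:bias-expression} --- conditional on $X$ and $T = t$ the variable $\tilde{g}(X,U)$ is $\betaDist\big(g(X)(\nicefrac{1}{\alpha}-1) + 1[t=1],\ (1-g(X))(\nicefrac{1}{\alpha}-1) + 1[t=0]\big)$. The key lemma is: if $V \sim \betaDist(a,b)$ then $\var(\logit V) = \digammafn_1(a) + \digammafn_1(b)$. I would prove this with the gamma representation $V = G_1/(G_1 + G_2)$, where $G_1 \sim \gammaDist(a,1)$ and $G_2 \sim \gammaDist(b,1)$ are independent and $S \defeq G_1 + G_2 \sim \gammaDist(a+b,1)$ is independent of $V$: from $\log G_1 = \log V + \log S$ and $\log G_2 = \log(1-V) + \log S$, independence of $V$ and $S$ gives $\var(\log V) = \digammafn_1(a) - \digammafn_1(a+b)$, $\var(\log(1-V)) = \digammafn_1(b) - \digammafn_1(a+b)$, and $\mathrm{cov}(\log V, \log(1-V)) = -\digammafn_1(a+b)$; since $\logit V = \log V - \log(1-V)$, these combine to $\var(\logit V) = \digammafn_1(a) + \digammafn_1(b)$. (Alternatively one may simply cite the standard log-moment formulas for the Beta distribution.)

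Finally I would assemble the pieces. Substituting the conjugate parameters into the lemma gives $\var(\logit\tilde{g}(X,U)\given X,T=t) = \digammafn_1\big(g(X)(\nicefrac{1}{\alpha}-1) + 1[t=1]\big) + \digammafn_1\big((1-g(X))(\nicefrac{1}{\alpha}-1) + 1[t=0]\big)$. Averaging over $T\given X$ with $\Pr(T=1\given X) = g(X)$, and then over $X$, produces a weighted sum of four digamma terms; regrouping them according to whether the argument is built from $g(X)(\nicefrac{1}{\alpha}-1)$ or from $(1-g(X))(\nicefrac{1}{\alpha}-1)$ collapses the sum to $\EE[W^2] = \sum_{t=0}^1 \EE\big[\digammafn_1\big(g(X)^t(1-g(X))^{1-t}(\nicefrac{1}{\alpha}-1) + 1[T=t]\big)\big]$, where the outer expectation is now over $(X,T)$ jointly. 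Dividing by $\EE[(Y - Q(T,X))^2]$ yields the claimed identity. The only genuinely nontrivial step is the variance-of-logit-Beta lemma; the points that need care are the cross-term argument (which relies on $W$ being measurable with respect to $(X,T,U)$) and the final regrouping, which must be carried out carefully to land on the compact indicator form in the statement.
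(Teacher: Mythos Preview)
Your proposal is correct and follows essentially the same route as the paper's proof: reduce the numerator to $\delta^2\,\EE[\var(\logit\tilde{g}(X,U)\mid X,T)]$ via a cross-term argument, invoke Beta--Bernoulli conjugacy, and plug in the closed form for the variance of a logit-Beta. Your write-up is in fact more complete than the paper's, which simply asserts the identity $\var(\logit V)=\digammafn_1(a)+\digammafn_1(b)$ for $V\sim\betaDist(a,b)$ without proof; your Gamma-representation derivation of that lemma is a nice addition. One minor simplification: there is no need to first average over $T\mid X$ and then regroup four weighted terms---keeping the expectation over $(X,T)$ jointly, the conditional variance already splits as two trigamma terms with arguments $g(X)(\nicefrac{1}{\alpha}-1)+T$ and $(1-g(X))(\nicefrac{1}{\alpha}-1)+1-T$, which is immediately the $t=1$ and $t=0$ summands in the statement.
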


We do not reparameterize the strength of confounding
on treatment assignment because, by design, $\alpha$ is already interpretable
and on a fixed, unitless scale.

\parhead{Estimating bias}
In combination, \cref{thm:bias-expression,thm:part-rsq} yield an expression for the bias in terms of $\alpha$ and $R_{Y,\mathrm{par}}^{2}$.
In practice, we can estimate the bias induced by confounding by fitting models for $\hat{Q}$ and $\hat{g}$ and replacing
the expectations by means over the data. To avoid problems associated with overfitting, we recommend a data splitting approach.
Namely, split the data into $k$ folds and, for each fold, estimate $Q(t_i,x_i)$ and $g(x_i)$
by fitting the $\hat{Q}$ and $\hat{g}$ models on the other $k-1$ folds.

\section{Calibration using observed data}
The analyst must make judgments about the influence a hypothetical
unobserved confounder might have on treatment assignment and outcome.
To calibrate such judgments, we'd like to have a reference point
for how much the observed covariates influence the treatment assignment
and outcome. In the sensitivity model, the degree of influence is
measured by partial $R_{Y}^{2}$ and $\alpha$. We want to measure
the degree of influence of an observed covariate $Z$ given the other
observed covariates $X\exclude Z$.

For the outcome, this can be measured as:
\[
R_{Y \cdot Z | T, X\exclude Z}^{2}\defeq\frac{\EE(Y-\EE[Y\given T,X\exclude Z])^{2}-\EE(Y-Q(T,X))^{2}}{\EE(Y-\EE[Y\given T,X\exclude Z])^{2}}.
\]
In practice, estimate the
quantity by fitting a new regression model $\hat{Q}_{Z}$ that predicts $Y$ from
$T$ and $X\exclude Z$. Then we compute
\[
R_{Y \cdot Z | T, X\exclude Z}^{2}=\frac{\frac{1}{n}\sum_i(y_{i}-\hat{Q}_{Z}(t_{i},x_{i}\exclude z_{i}))^{2}-\frac{1}{n}\sum_i(y_{i}-\hat{Q}(t_{i},x_{i}))^{2}}{\frac{1}{n}\sum_i(y_{i}-\hat{Q}_{Z}(t_{i},x_{i}\exclude z_{i}))^{2}}.
\]

It is less clear how to produce the analogous estimate for the influence on treatment assignment.
To facilitate the estimation, we reexpress $\alpha$ in a more convenient form (proof in appendix):
\begin{restatable}{theorem}{tgformula}
  Under our sensitivity model, \cref{ass:main-sensitivity-model},
\[
  \alpha=1 - \frac{\EE[\tilde{g}(X,U)(1-\tilde{g}(X,U))]}{\EE[g(X)(1-g(X))]}.
\]
\end{restatable}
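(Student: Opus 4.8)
The plan is to compute $\EE[\tilde g(X,U)\,(1-\tilde g(X,U))\mid X]$ in closed form using the mean and variance of the Beta distribution, then take the expectation over $X$ and solve for $\alpha$. Write $a = g(X)(\nicefrac{1}{\alpha}-1)$ and $b = (1-g(X))(\nicefrac{1}{\alpha}-1)$, so that $\tilde g(X,U)\mid X \sim \betaDist(a,b)$ with $a+b = \nicefrac{1}{\alpha}-1$ and, crucially, $a+b+1 = \nicefrac{1}{\alpha}$. By construction $a/(a+b) = g(X)$ and $b/(a+b) = 1-g(X)$, so $ab/(a+b)^2 = g(X)(1-g(X))$.

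Next I would use $\EE[\tilde g(1-\tilde g)\mid X] = \EE[\tilde g\mid X] - \EE[\tilde g^2\mid X] = \frac{a}{a+b}\big(1-\frac{a}{a+b}\big) - \var(\tilde g\mid X)$ together with the Beta variance formula $\var(\tilde g\mid X) = \frac{ab}{(a+b)^2(a+b+1)}$. Collecting terms gives $\EE[\tilde g(1-\tilde g)\mid X] = \frac{ab}{(a+b)^2}\cdot\frac{a+b}{a+b+1}$. Substituting $ab/(a+b)^2 = g(X)(1-g(X))$ and $\frac{a+b}{a+b+1} = \frac{\nicefrac{1}{\alpha}-1}{\nicefrac{1}{\alpha}} = 1-\alpha$ then yields the clean per-$X$ identity $\EE[\tilde g(X,U)(1-\tilde g(X,U))\mid X] = (1-\alpha)\,g(X)(1-g(X))$.

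Finally I would apply the tower property to get $\EE[\tilde g(X,U)(1-\tilde g(X,U))] = (1-\alpha)\,\EE[g(X)(1-g(X))]$ and rearrange, noting that $\EE[g(X)(1-g(X))] > 0$ for non-degenerate $g$ so the division is well-posed. This is exactly the claimed expression for $\alpha$.

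There is no genuine obstacle here---the argument is a short Beta-moment computation---so the only ``hard part'' is bookkeeping: getting the algebraic cancellation right and remembering that the outer expectation over $X$ cannot be dropped, since the per-$X$ identity already carries the factor $g(X)(1-g(X))$ inside. One could alternatively start from the interpretation \eqref{eq:alpha-interp}, $\alpha = \EE[\tilde g\mid T=1]-\EE[\tilde g\mid T=0]$, and invoke Beta--Bernoulli conjugacy, but that route is messier because conditioning on $T$ reweights the law of $X$ by $g(X)$ and $1-g(X)$; the variance computation above sidesteps this entirely.
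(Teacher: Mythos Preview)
Your proof is correct. The paper takes a slightly different route: it first applies the law of total variance to $\tilde g$ (using the Beta variance $\var(\tilde g\mid X)=\alpha\,g(X)(1-g(X))$) to obtain $\alpha = (\var(\tilde g)-\var(g))/\EE[g(1-g)]$, and then rewrites the numerator by computing $\var(T)$ two ways, conditioning on $g$ versus on $\tilde g$, which gives $\var(\tilde g)-\var(g)=\EE[g(1-g)]-\EE[\tilde g(1-\tilde g)]$. Both arguments ultimately rest on the same Beta second-moment identity, but yours is more direct: you establish the pointwise relation $\EE[\tilde g(1-\tilde g)\mid X]=(1-\alpha)\,g(X)(1-g(X))$ and integrate once, whereas the paper detours through $\var(T)$ as an auxiliary pivot. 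The paper's version has the small conceptual payoff of exhibiting $\alpha$ as a normalized gap between two total-variance decompositions of $T$; your version is shorter and makes clear that the identity already holds conditionally on $X$, not merely in expectation.
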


Then, we can measure influence of observed covariate $Z$ on treatment
assignment given $X\exclude Z$ in an analogous fashion to the outcome.
We define $g_{X\exclude Z}(X\exclude Z)=\Pr(T=1\given X\exclude Z)$,
then fit a model for $g_{X\exclude Z}$ by predicting $T$ from $X\exclude Z$, and estimate 
\[
\hat{\alpha}_{Z | X\exclude Z}=1 - \frac{\frac{1}{n}\sum_{i}\hat{g}(x_{i})(1-\hat{g}(x_{i}))}{\frac{1}{n}\sum_{i}\hat{g}_{X\exclude Z}(x_{i}\exclude z_{i})(1-\hat{g}_{X\exclude Z}(x_{i}\exclude z_{i}))}.
\]

\begin{figure}
  \begin{center}
    \includegraphics[width=0.95\textwidth]{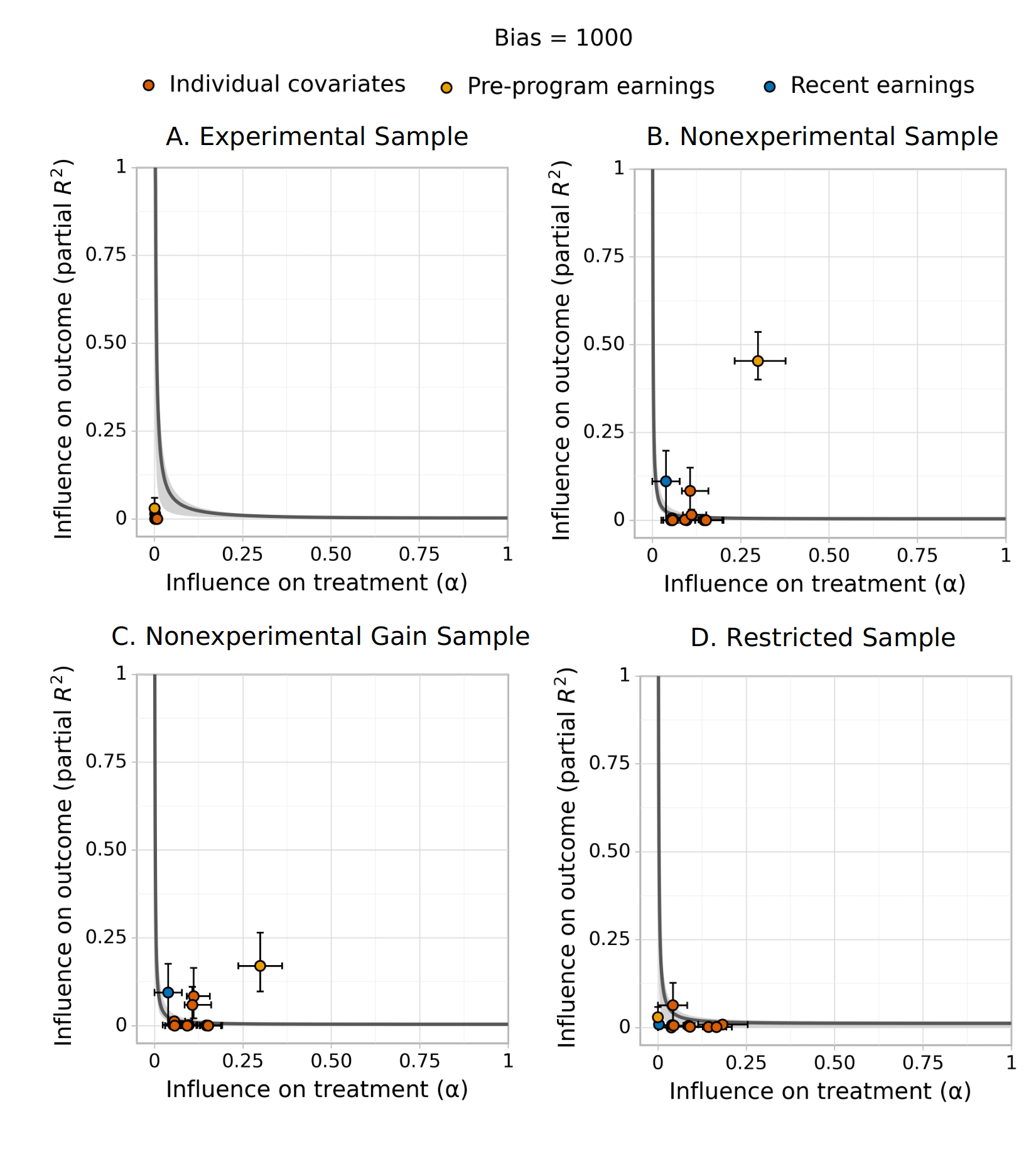}
  \end{center}
  \caption{Austen plots preserve the qualitative conclusions of Imbens' analysis without imposing any restriction on
    the modeling of the observed data. In each plot, the black solid line indicates the partial $R^2$ and $\alpha$ values that would induce a bias of at least \$1000. Each plot also includes estimates for the strength of confounding for  each of the nine covariates (red circles) as well as recent lag in earnings (RE75 and pos75, yellow circles), and the all preprogram earnings (RE74, pos74, RE75, pos75, green circles).  \label{fig:sensitivity-plots}}
   \vspace{-35pt}
\end{figure}
\parhead{Grouping covariates} The estimated values $\hat{\alpha}_{X\exclude Z}$ and $\hat{R}_{Y,X\exclude Z}^{2}$ measure the influence of $Z$
conditioned on all the other confounders. In some cases, this can be misleading. For example, if some piece of information is important but there are multiple covariates providing redundant measurements, then the estimated influence of each covariate will be small.
To avoid this, we suggest grouping together related or strongly dependent covariates and computing the influence of the entire
group in aggregate. For example, grouping income, location, and race as `socioeconomic variables'. 

\section{Examples}
We now examine several examples of Austen plots for sensitivity analysis, showing:
(1) We preserve the qualitative usefulness of Imbens' approach, without any modeling restrictions.
(2) Austen plots are informative about bias due to unobserved confounding in real observational studies.
(3) The bias estimates tend to be conservative.%

\parhead{Imbens' analysis}
To demonstrate the use of Austen plots, we replicate \citet{Imbens:2003} example and produce sensitivity plots for
variations on the LaLonde job training data \cite{LaLonde:1986}.
We use exactly the same data splitting and adjustment sets as \citet{Imbens:2003}.
We find that the conclusions about the effects of unobserved confounding are substantively the same as \citet{Imbens:2003}.
That is, we arrive at sensible sensitivity conclusions while liberating ourselves from the need for
parametric assumptions on the observed data. We report bias for the average treatment effect on the treated.

The original purpose of the LaLonde job training data was to analyze the effect of a job training program on the annual earnings of a participant.
The data consists of both an experimental (randomly assigned) part, and an observational sample from the Panel Study of
Income Dynamics (PSID).
We test on (1) the experimental sample, (2) the experimental treated with observational controls, (3) the same as 2, except with outcome defined as change in earnings since 1974, and (4) the same as 2, except individuals with high earnings pretreatment ($>$\$5000) are dropped.  
We adjust for: married, age, education, race, and earnings in 1974 and 1975. 
There are large differences in these background covariates between the experimental sample
and the PSID controls---this is a main challenge for the LaLonde setup.\looseness=-1

Deviating from Imbens, we fit random forests for $\hat{Q}$ and $\hat{g}$. This demonstrates the sensitivity analysis
in the case where the observed data model does not have a simple parametric specification.

Austen plots for these analyses are displayed in \cref{fig:sensitivity-plots}.
Following Imbens, we choose a bias of \$1000 (for context, the effect estimate from the RCT is about \$1800).
The experimental sample (panel A) is robust to unobserved confounding:
inducing a bias of \$1000 would require an unobserved confounder with a much stronger
effect than any of the measured covariates or earning variables. 
By contrast, the non-experimental samples (panels B and C) are much more sensitive to unobserved confounding.
Several of the covariates, if unobserved, would suffice to bias the estimate by \$1000.
Note that the sensitivity curves are the same for both B and C, since the outcome is just a linear transformation.
Finally, the restricted sample (panel D) is both significantly more robust to bias than the
full non-experimental samples, and the influence of the observed covariates is much reduced.
Imposing the restriction mitigates the treatment-control population mismatch.

\parhead{Practical relevance}
\cref{fig:real-data} shows Austen plots for two effects estimated from observational data.
The first study is based on data from the Infant Health and Development Program (IHDP) \cite{BrooksGunn:Liaw:Klebanov:1992},
an experiment targeted at low-birth-weight, premature infants that provided child care and
\begin{figure}
  \begin{center}
    \includegraphics[width=0.95\textwidth]{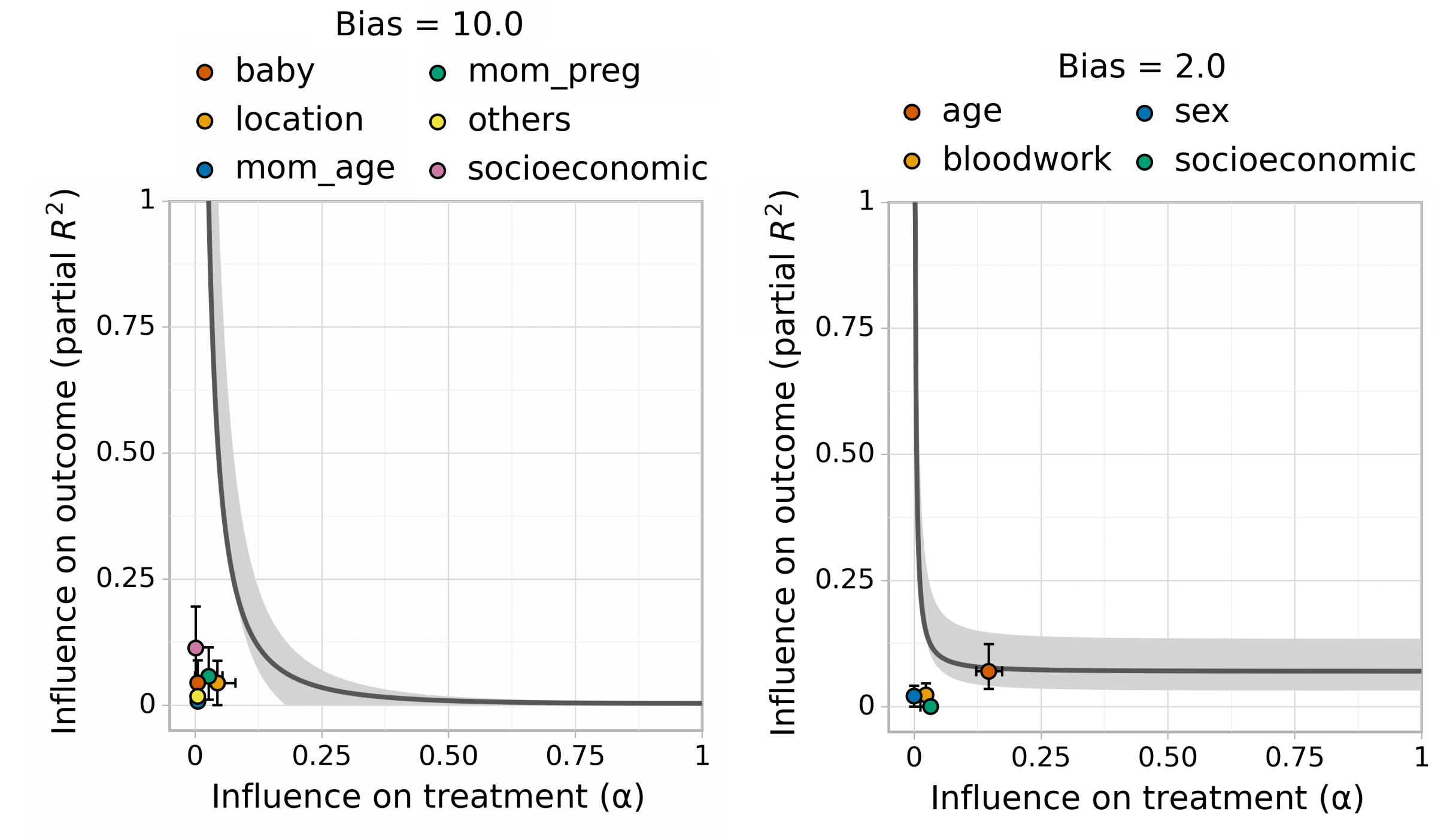}
  \end{center}
  \caption{\label{fig:real-data} Austen plots are informative when applied to real data analysis.
    The left-hand plot is for the estimated effect of IHDP participation level on child IQ.
    The conclusions of this study seem robust to unobserved confounding---even
    the observed covariate groups do not have sufficient influence to undo the qualitative conclusion of the model.
    The right-hand plot is for the estimated effect of combination treatment on diastolic blood pressure. 
    In this case, whether the study conclusions are reliable depends on whether an unobserved confounder as influential as age is credible---we should consult with an expert.
    In both cases, we model the outcome with Bayesian Additive Regression Trees, and the propensity score with logistic regression.}
 \end{figure}
home visits. We look at a study measuring the effect of the level of participation in
IHDP child development centers in the two years preceding an IQ test on the outcome of the IQ test \citep[][\S 6.1]{Hill:2011}.
Level of participation is not randomly assigned, so \citet{Hill:2011} estimates the effect
by using Bayesian Additive Regression Trees (BART) \cite{Chipman:George:McCulloch:2010} to control for a range of covariates.

The second plot corresponds to the estimate of the effect of combination blood pressure medications on diastolic blood pressure
described in \cite{Dorie:Harada:Carnegie:Hill:2016}. The data is derived from %
an American survey that includes a variety of socioeconomic and health factors. We again use BART.

The Austen plots are informative for these examples.
In the first case, the Austen plot increases our confidence in the qualitative result.
In the second case, it suggests we should be cautious about the conclusions unless unobserved confounders as strong as age are deemed unlikely.

\parhead{Sensitivity model conservatism}
\begin{table*}[t]
  \caption{The sensitivity model tends to be conservative in its bias estimates.
    Bias estimates for leaving out a confounding covariate are computed
    according to the sensitivity model (using the left-out covariate data)
    and by comparing non-parametric effect estimates from the full data ($\tau_x$),
    and the left-out covariate data ($\tau_{x\exclude z}$). In all cases,
    the sensitivity model estimate is larger.
  }\label{tb:cons-bias}
  \begin{center}
    \begin{tabular}{l @{\hskip 3em} *{3}{S}}
      \toprule
      \multicolumn{1}{r}{Study:\ \ }  & {LaLonde Restricted} & {Blood Pressure} & {IHDP} \\ 
      \multicolumn{1}{r}{Omitted covariate:\ \ }  & {Education} & {Age} & {Socioeconomic} \\ 
      \midrule
      $\tau_x$ & 2508.63 &	-2.33 &	12.72 \\
      $\tau_{x\exclude z}$& 1982.54 &	-2.86 &	13.35 \\
      \cmidrule{2-4}
      Nonparametric $\mathtt{bias}$ & 526.09 & 0.53 & -0.63  \\
      Sensitivity Model $\abs{\mathtt{bias}}$  & 986.90 & 1.91	& 0.75  \\
      \bottomrule
    \end{tabular}
  \end{center}
\end{table*}
Any sensitivity analysis must be predicated on some assumption about the influence of the unobserved confounder.
The bias curves and influence estimates in Austen plots are contingent on the assumed sensitivity model, \cref{ass:main-sensitivity-model}.
We motivated our particular choice by simplicity and tractibility.
We also expect that our associated sensitivity model will often yield conservative values for bias; i.e., the bias anticipated by the sensitivity model is higher than the true bias induced by the real, physical, mechanism of confounding.
The reason is that bias is monotonically increasing in both treatment and outcome influence.
In reality, hidden confounders can have more complicated relationships that `cancel out'.
For example, the effect of age in the blood pressure example might be: blood pressure increases with age, but young patients don't take their medication (preferring diet and exercise), middle age patients take it at a base rate, and old patients don't take the medication (fatalism).
These effects cancel out somewhat, reducing the bias induced by failing to adjust for age. \cref{ass:main-sensitivity-model} does not allow for such cancellations.

To test conservativism, we create deliberately confounded datasets by removing an observed confounder from our baseline data.
We compute the bias anticipated by our model, $\mathtt{bias}(R_{Y,X\exclude Z}^{2},\alpha_{X\exclude Z})$, using the measured influence strength  of the covariate.
We compute a non-parametric estimate of the bias by estimating the effect with the full data,
estimating the effect with the deliberately confounded data, and taking the difference.
The results are shown in \cref{tb:cons-bias}, and confirm the conservatism-in-practice.
This increases our confidence that when an Austen plot suggests robustness to unobserved confounding we
do indeed have such robustness.

\section{Related Work}\label{sec:related-work}
We survey a few approaches to sensitivity analysis; for reviews, see \cite{Liu:Kuramoto:Stuart:2013,Richardson:Hudgens:Gilbert:Fine:2014}.

Austen plots build on sensitivity analysis based on parametric models in the style of \citet{Imbens:2003}.
These typically assume some relatively simple parametric latent variable model, where the latent variable is the unobserved confounder.
\citet{Dorie:Harada:Carnegie:Hill:2016} extends an Imbens-like approach to accomodate BART as the outcome model.
\citet{Cinelli:Hazlett:2019} allow for arbitrary kinds of confounders and propensity score models, but require
that the outcome is modeled with linear regression.
\citet{Cinelli:Kumor:Chen:Pearl:Bareinboim:2019} assume a causal DAG where all relationships are linear. They make further assumptions about the DAG and use identification tools to derive effect (hence, bias) estimates from the assumptions.

An alternative is to relax parametric assumptions at the price of requiring the analyst to
make judgments about more abstract sensitivity parameters \citep[e.g.,][]{Franks:Damour:Feller:2019,Shen:Li:Li:Were:2011,Hsu:Small:2013,Bonvini:Kennedy:2019}.
\citet{Franks:Damour:Feller:2019} allow arbitrary models for the initial analysis. Their sensitivity model is adapted from the missing data literature, and requires the analyst to specify $\Pr(T=t \given Y(1-t), X)$---the posterior belief about probability of treatment assignment had the counterfactual outcome under no-treatment been observed. 
\citet{Shen:Li:Li:Were:2011} give an inverse-propensity-weighted specific approach where the sensitivity parameters are $\var{\tilde{g}/g}$ and the correlation between $\tilde{g}/g$ and the counterfactual outcomes. 
\citet{Bonvini:Kennedy:2019} assume each unit is confounded or not, and take the proportion of confounding as the sensitivity parameter.

Another approach estimates the range of effects consistent with a bound on confounding \cite[e.g.,][]{Rosenbaum:2010,Zhao:Small:Bhattacharya:2019,Yadlowsky:Namkoong:Basu:Duchi:Tian:2018}. For instance, \citet{Rosenbaum:2010,Yadlowsky:Namkoong:Basu:Duchi:Tian:2018} assume $\tilde{g}$ is logit-linear in an unobserved confounder $U$ with $U \in_{\mathrm{R}} [-1,1]$ and derive effect bounds from assumed bounds on the coefficient of $U$. This avoids parametric assumptions on observed data and on the relationship between $U$ and $Y$. However, specifying the coefficient a priori can be challenging. This approach is analogous to considering only partial $R^2$ equal 1 point on the bias curve.

There is also substantial work on calibrating sensitivity analyses using observed data \citep[e.g.,][]{Franks:Damour:Feller:2019,Hsu:Small:2013,Cinelli:Hazlett:2019}, including some concerns raised in \cite[][\S 6.2]{Cinelli:Hazlett:2019} (discussed below).

\section{Limitations and Future Directions}
We have developed Austen plots as a tool for assessing sensitivity to unobserved confounding.
Austen plots are most useful in situations where the conclusion from the plot would be `obvious' to a domain expert.
For instance, the LaLonde RCT plot shows that a confounder would have to be much stronger than the observed covariates to induce substantial bias. Similarly, the LaLonde observational plot shows that confounders similar to the observed covariates could
induce substantial bias. Such conclusions would not be affected by mild perturbations of the
dots or the line.
By contrast, Austen plots should not be used to draw conclusions such as,
``I think a latent confounder could only be 90\% as strong as `age', so there is evidence of a small non-zero effect''.
Such nuanced conclusions might depend on issues such as
the particular sensitivity model we use, or finite-sample variation of our bias and influence estimates,
or on incautious interpretation of the calibration dots.
Drawing precise quantitative (rather than qualitative) conclusions about induced bias from Austen plots would require
careful consideration of these issues, and expert statistical guidance.
Hence, Austen plots should be used mainly with domain experts to guide qualitative conclusions (``this job program likely works'', ``this study doesn't establish drug efficacy'').

\paragraph{Inference}
In practice, producing Austen plots requires estimating the parameters of the curves and dots using a finite data sample.
However, the plots may be misleading if there is significant finite-sample estimation error.
In this paper, we use a simple plug-in estimator that substitutes $\hat{Q},\hat{g}$ for the true $Q$ and $g$.
There are two main limitations:
the simple estimator may be statistically inefficient, and
uncertainty quantification is difficult.
The plots in this paper used bootstrap confidence intervals (100 samples).
This is computationally burdensome for the machine-learning models that motivate the paper.
Further, the bootstrapping for the influence statistics is finnicky---e.g., care must be taken to
group replicated samples together in the $k$-folding.
It would be very useful to develop (non-parametric) efficient estimators for the key quantities 
that also admit explicit variance estimators.    
  
\paragraph{Calibration using observed data}
The interpretation of the observed-data calibration requires some care.
The sensitivity analysis requires the analyst to make judgements about the strength of influence
of the unobserved confounder $U$, \emph{conditional on the observed covariates $T,X$}.
However, we report the strength of influence of observed covariate(s) $Z$,
\emph{conditional on the other observed covariates $T,X \exclude Z$}.
The difference in conditioning sets can have subtle effects.

\citet{Cinelli:Hazlett:2019} give an example where $Z$ and $U$ are
identical variables in the true model, but where influence of $U$ given $T,X$
is larger than the influence of $Z$ given $T,X\exclude Z$.
(The influence of $Z$ given $T, X \exclude Z, U$ would be the same as the influence of $U$ given $T,X$).
Accordingly, an analyst is \emph{not} justified in a judgement such as, ``I know that $U$ and $Z$
are very similar. I see $Z$ has substantial influence, but the dot is below the line.
Thus, $U$ will not undo the study conclusions.''
In essence, if the domain expert suspects a strong interaction between $U$ and $Z$ then
naively eyeballing the dot-vs-line position may be misleading.
A particular subtle case is when $U$ and $Z$ are independent variables that both strongly
influence $T$ and $Y$. The joint influence on $T$ creates an interaction effect between
them when $T$ is conditioned on (the treatment is a collider).
This affects the interpretation of $R^2$.
Indeed, we should generally be skeptical of sensitivity analysis interpretation when it is expected that a
strong confounder has been omitted.
In such cases, our conclusions may depend substantively on the particular form of our sensitivity model,
or other unjustifiable assumptions.

Although the interaction problem is conceptually important,
its practical significance is unclear.
We often expect the opposite effect: if $U$ and $Z$ are dependent (e.g., race and wealth) then omitting $U$ should increase the
apparent importance of $Z$---leading to a conservative judgement.
It would be useful for future work to articulate
specific situations where Austen plots cannot be trusted, beyond ``we expect a strong $U$ is missing''.
As part of this, it could be useful to derive a bias formula in terms of $R^2_{Y\cdot U | X}$ instead of $R^2_{Y\cdot U | X, T}$.
This would eliminate the need to consider the collider naunce.

\section{Societal Consequences}
This paper addressed sensitivity analysis for causal inference.
We have extended Imbens' approach to allow the use of arbitrary machine-learning methods
for the data modeling.
Austen plots provide an entirely post-hoc and blackbox manner of conducting sensitivity analysis.
In particular, they make it substantially simpler to perform sensitivity analysis.
This is because the initial analysis can be performed without have a sensitivty analysis already in mind,
and because producing the sensitivity plots only requires predictions from models that the practitioner has fit anyways.

The ideal positive consequence is that routine use of Austen plots will improve the credibility of machine-learning based
causal inferences from observational data.
Austen plots allow us to both use state-of-the-art models for the observed part of the data, and to reason coherently about the
causal effects of potential unobserved confounders. 
The availability of such a tool may speed the adoption of machine-learning based causal inference
for important real-world applications (where, so far, adoption has been slow).

On the negative side, an accelerated adoption of machine-learning methods into causal practice
may be undesirable. This is simply because the standards of evidence and evaluation used in common
machine-learning practice do not fully reflect the needs of causal practice.
Austen plots partially bridge this gap, but they just one of the elements required to establish credibility. 

\section{Funding}
Victor Veitch was supported in part by the Government of Canada through an NSERC PDF.

\printbibliography

\newpage
\appendix
\onecolumn

\section{Average Treatment Effect on the Treated}
\label{sec:att}
In many situations, the average treatment effect (ATT) on the treated is a more convenient estimand than the ATE.

\parhead{Bias} The same logic we used to derive an expression for the bias of the ATE
can be used to derive an expression for the bias of the ATT.
For the bias estimand, we just change 
take the expectation over $X$ in \cref{thm:bias-expression} conditioned on $T=1$.
In practice, the bias can be estimated by taking the mean over only treated units.
Note that the reparameterization calculation does not change.

\parhead{Calibration using observed data}
Reference values for the ATT can be computed in exactly the same way as for the ATE---i.e., it is not required to restrict the expectations to only the treated units.
This is because the bias expression is given in terms of `full data' $\alpha$ and $R_{Y,\mathrm{par}}^{2}$.

\section{Proofs}

\tpartrsq*
\begin{proof}
First, we write:
\begin{align}
\EE(Y-\EE[Y\given T,X,U])^{2} & =\EE(Y-Q(T,X))^{2}\nonumber \\
 & \quad-2\delta\EE[(Y-Q(T,X))(\logit\tilde{g}(X,U)-\EE[\logit\tilde{g}(X,U)\given X,T])]\nonumber \\
 & \quad+\delta^{2}\EE((\logit\tilde{g}(X,U)-\EE[\logit\tilde{g}(X,U)\given X,T]))^{2}\nonumber \\
 & =\EE(Y-Q(T,X))^{2}-\delta^{2}\EE[\var(\logit\tilde{g}(X,U)\given X,T)].\label{eq:txu-residual}
\end{align}
Where we've used, 
\begin{align*}
\EE[(Y-Q(T,X))(\logit\tilde{g}(X,U)-\EE[\logit\tilde{g}(X,U)\given X,T])]\\
\qquad=\EE[\EE[(Y-Q(T,X))\given T,X,U](\logit\tilde{g}(X,U)-\EE[\logit\tilde{g}(X,U)\given X,T])]]
\end{align*}
 and other standard conditional expectation manipulations. 

The usefulness of \cref{eq:txu-residual} is that $\var(\logit\tilde{g}(X,U)\given X,T)$
has an analytic expression.
Namely, by Beta-Bernoulli conjugacy, this is the variance of a logit-transformed Beta distribution.
The analytic expression for this variance is,
\[
  \var(\logit\tilde{g}(X,U)\given X,T) = \digammafn_1 \big(g(X)(\nicefrac{1}{\alpha}-1) + T\big) +
  \digammafn_1 \big((1-g(X))(\nicefrac{1}{\alpha}-1) + 1-T \big),
\]
where $\digammafn_1$ is the trigamma function.
The claimed result follows by plugging in this expression into \cref{eq:txu-residual}. 

\end{proof}

\tgformula*
\begin{proof}
The key insight is:
\begin{align*}
\var(\tilde{g}) & =\EE[\var(\tilde{g}\given g)]+\var(\EE[\tilde{g}\given g])\\
 & =\EE[\alpha g(1-g)]+\var(g),
\end{align*}
where the first line is the law of total variance, and the second
line uses the assumed $\betaDist$ distribution of $\tilde{g}\given g$.
Accordingly,
\[
\alpha=\frac{\var(\tilde{g})-\var(g)}{\EE[g(1-g)]}.
\]
Now, observe that by the law of total variance,
\begin{align*}
  \var(T) &= \EE[\var(T \given g)] + \var(\EE[T \given g]) \\
          &= \EE[g(1-g)] + \var(g),
\end{align*}
where we have used that $T \given g$ is Bernoulli.
By the same logic,
\[
  \var(T) = \EE[\tilde{g}(1-\tilde{g})] + \var(\tilde{g}).
\]
Whence,
\[
\var(\tilde{g}) - \var(g) = \EE[g(1-g)] - \EE[\tilde{g}(1-\tilde{g})].
\]
The result follows immediately.
\end{proof}

\end{document}